\newtheorem{definition}{Definition}
\newtheorem{example}{Example}
\newtheorem{problem}{Problem}
\newtheorem{theorem}{Theorem}
\newtheorem{corollary}{Corollary}
\newtheorem{proposition}{Proposition}
\title{\Large \bf Formal Methods for Adaptive Control of Dynamical Systems}
\author{Sadra Sadraddini and Calin Belta 
\thanks{The authors are with the Department of Mechanical Engineering, Boston University, Boston, MA 02215  \{sadra,cbelta\}@bu.edu.
This work was partially supported by the NSF under grants CPS- 1446151 and CMMI-1400167.}
}
\begin{document}
\maketitle

\thispagestyle{empty}
\pagestyle{empty}

\begin{abstract}

We develop a method to control discrete-time systems with constant but initially unknown parameters from linear temporal logic (LTL) specifications. We introduce the notions of (non-deterministic) parametric and adaptive transition systems and show how to use tools from formal methods to compute adaptive control strategies for finite systems. For infinite systems, we first compute abstractions in the form of parametric finite quotient transition systems and then apply the techniques for finite systems. Unlike traditional adaptive control methods, our approach is correct by design, does not require a reference model, and can deal with a much wider range of systems and specifications. Illustrative case studies are included.

\end{abstract}

\section{Introduction}
%
%

Adaptive control, or self-learning control, is a set of techniques to automatically adjust controllers for uncertain systems. In the traditional problem of adaptive control, a parameterized system is considered where the parameters are assumed to be constant, but their values are initially unknown to the controller. The goal is to achieve some desired performance while the parameters are (possibly indirectly) estimated online. The solution to this problem can be extended to scenarios where parameters infrequently change or vary slowly. Numerous adaptive control methods have been developed since 1950s \cite{aastrom2013adaptive,krsticnonlinear,slotine1991applied,ioannou2012robust}. The main theoretical guarantee sought in all conventional adaptive control techniques is stability - whether it is specified in terms of tracking a set-point, trajectory, or a reference model.

One particular limitation of current adaptive control methods is handling systems that involve discontinuities.  Most adaptive control techniques rely on the continuity of the model and its parameterization. In many realistic models, state, control or parameters take values from both continuous and discrete domains. Within methods that do not entirely depend on the continuity of the model, a promising direction is using multiple models/controllers \cite{morse1996supervisory,narendra2000adaptive,anderson2001multiple,hespanha2003overcoming}, where the objective is to achieve stability via designing a switching law (supervisory control) to coordinate the controllers. Model reference adaptive control (MRAC) of specific forms of scalar input piecewise affine systems were studied in \cite{di2013hybrid,di2016extended}. However, it is still not clear how to deal with general discrete or hybrid systems.  


Another remaining open problem in adaptive control is dealing with specifications richer than stability. In many engineering applications, we are interested in complex requirements composed of safety (something bad never happens), liveness (something good eventually happens), sequentiality of tasks, and reactiveness. Temporal logics \cite{baier2008principles} provide a natural framework for specifying such requirements. The main challenge in designing adaptive control techniques from formal specifications is handling hard constraints on the evolution of the system. Even for the simpler problem of constraints defined as a safe set in the state-space, designing adaptive control strategies is challenging. Existing works on this problem \cite{guay2012adaptive,aswani2013provably,tanaskovic2014adaptive,di2016indirect,he2016adaptive} apply robust control techniques to ensure infinite-time constraint satisfaction for all admissible parameters. This approach may be severely conservative since if a robust control strategy does not exist for all admissible parameters, it does not necessarily indicate that constraints can not be satisfied after some measurements are taken from the system and a more accurate model is available. Even though  \cite{aswani2013provably,tanaskovic2014adaptive,di2015indirect} update the model and synthesize controls in a receding horizon manner, they decouple constraint satisfaction and learning. However, there exists a deep coupling: when synthesizing controls, not only constraints must be taken into account, but also the evolution of the system should also lead to subsequent measurements that are more informative about the uncertainties in the model. In other words, control decisions have a indirect influence on the way the model is updated. 
We use tools from formal methods \cite{clarke1999model,baier2008principles} to develop a framework for correct-by-design adaptive control that can deal with complex systems and specifications. Formal methods have been increasingly used in control theory in recent years \cite{tabuada2009verification,belta2017book}. We consider discrete-time systems with constant but initially unknown parameters. We describe system specifications using linear temporal logic (LTL) \cite{baier2008principles}. As in any other adaptive control technique, we require an online parameter estimator. Our parameter estimator maps the history of the evolution of the system to the set of ``all possible" parameters, which contains the actual parameters. We embed the parameterized system in a (non-deterministic) parametric transition system (PTS), from which we construct a (non-deterministic) adaptive transition system (ATS) that contains all the possible combinations of transitions with the unfoldings of the parameter estimator. 
The main results and contributions of this paper are as follows:
\begin{itemize}
\item For finite systems, the LTL adaptive control problem reduces to a Rabin game \cite{thomas2002automata} on the product of the finite ATS and the Rabin automaton corresponding to the LTL specification. 
 The method is correct by design and it is complete, i.e. it finds a solution if one exists;
\item For infinite systems, we construct finite quotient ATSs by partitioning the state and the parameter space and quantizing the control space. Once an adaptive control strategy is found for the quotient, it is guaranteed that it will also ensure the satisfaction of the LTL formula for the original infinite system. The method may be conservative. 
\end{itemize}

This paper is related to recent works that seek a formal approach to combining learning and control. The authors in \cite{quindlen2016region,kozarev2016case} provided statistical certificates for MRAC subject to safety constraints. The idea was based on implementing MRAC from a set of different initial conditions and parameters and observing if the trajectories were safe.  However, the design of MRAC itself did not take into account the constraints. Moreover, given a temporal logic specification and a system model with parametric uncertainty, it is not clear how a reference model should be chosen for implementing MRAC. If a reference model is able to satisfy the specification, the matching condition may not hold, i.e. there may not exist a controller for the original system to behave as the reference model. Therefore, classic MRAC may not be suitable for the purpose of this paper as it requires a careful search of reference models subject to matching conditions.

Reinforcement learning (RL) methods are conceptually similar to adaptive control, but are used in a probabilistic framework and require a reward mechanism to generate control policies. The authors in \cite{sadigh2014learning} studied RL from LTL specifications, where large rewards were dedicated to the pairs in the Rabin automaton to incentivize the system to visit them regularly or avoid them. In \cite{aksaray2016q}, Q-learning was applied to control MDPs from signal temporal logic (STL) specifications, where the reward was the STL robustness score - a measure of distance to satisfaction. Other closely related works include \cite{fu2014adaptive,leahy2016integrate}, where the problem of LTL control was modeled as a game between a player (controller) and an adversary (environment). The controller inferred the ``grammar" of actions taken by the environment.  However, this approach also decoupled adaptation (learning) and control. If the LTL formula was violated during the grammar learning, the control software stopped. While these methods (including RL) have the advantage that they require less prior knowledge about the system, they are not suitable for performance-critical systems with constraints that should never be violated, even during the learning process.



This paper is organized as follows. First, we provide the necessary background on LTL, transition systems and LTL control in Sec. \ref{sec:back}. The adaptive control problem is formulated in Sec. \ref{sec:problem}. We define PTSs in Sec. \ref{sec:parametric}. Technical details for the solutions for finite and infinite systems are explained in Sec. \ref{sec:finite} and \ref{sec:infinite}, respectively. Finally, two case studies are presented in Sec. \ref{sec:case}.

\section{Background}
\label{sec:back}
\subsection{Notation}
The set of real and Boolean values are denoted by $\mathbb{R}$ and $\mathbb{B}$ respectively. The empty set is denoted by $\emptyset$. Given a set ${S}$, we use $|S|$,  $2^S$, $2^S_{-\emptyset}$ to denote its cardinality, power set, and power set excluding the empty set, respectively. 
An alphabet $\mathcal{A}$ is a finite set of symbols $\mathcal{A}=\{a_1,a_2,\cdots,a_A\}$. A finite (infinite) word is a finite-length (infinite-length) string of symbols in $\mathcal{A}$. 
For example, $w_1=a_1a_2a_1$ is a finite word, and $w_2=a_1a_2\overline{a_1}$ and $w_3=a_1\overline{a_2a_1}$ are infinite words over $\mathcal{A}=\{a_1,a_2\}$, where over-line stands for infinitely many repetitions. 
We use $\mathcal{A}^*$ and $\mathcal{A}^\omega$ to denote the set of all finite and infinite words that can be generated from $\mathcal{A}$, respectively.

\subsection{Linear Temporal Logic}
The formal definition of LTL syntax and semantics is not provided here as it can be found in the literature \cite{baier2008principles}. Here we provide an informal introduction and the necessary notation. LTL consists of a finite set of atomic propositions $\Pi$, temporal operators ${\bf G}$ (globally/always), ${\bf F}$ (future/eventually), ${\bf U}$ (Until), and Boolean connectives $\wedge$ (conjucntion), $\vee$ (disjunction), and $\neg$ (negation). LTL semantics are interpreted over infinite words over $2^\Pi$.
The set of all infinite words that satisfy an LTL formula $\varphi$ is denoted by $L(\varphi)$, $L(\varphi) \subset (2^\Pi)^\omega$, and is referred to as the \emph{language} of $\varphi$.

\begin{definition}
A Deterministic Rabin Automaton (DRA) is defined as the tuple $\mathcal{R}=(S,s^0,\mathcal{A},\alpha,\Omega)$, where:
\begin{itemize}
\item $S$ is a set of states;
\item $s^0$ is the initial state;
\item $\mathcal{A}$ is a finite set of inputs (alphabet);
\item $\alpha$ is a transition function $\alpha:S \times \mathcal{A} \rightarrow S$;
\item $\Omega=\left \{(F_1,I_1),\cdots,(F_r,I_r) \right\}$ is a finite set of pairs of sets of states, where $F_i,I_i \subset S ,i=1,\cdots,r$.    
\end{itemize}
\end{definition}

An infinite word $w \in \mathcal{A}^\omega$ determines a sequence of inputs for $\mathcal{R}$ that results in the \emph{run} $\zeta(w)=s_0s_1\cdots$, where $s_{k+1}=\alpha(s_k,a_k)$, $s_0=s^0$, and $a_k$ is the $k$'th input appearing in $w$. We define $Inf(\zeta)=\left\{ s | s \text{ appears infinitely often in } \zeta \right\}$. A run $\zeta$ is \emph{accepted} by $\mathcal{R}$ if there exists $i \in \{1,\cdots,m\}$ such that $Inf(\zeta) \cap F_i = \emptyset$ and $Inf(\zeta) \cap I_i \neq \emptyset$. In other words, $F_i$ is visited finitely many times and $I_i$ is visited infinitely often for some $i$. The language of $\mathcal{R}$, denoted by $L(\mathcal{R})$, $L(\mathcal{R}) \subset \mathcal{A}^\omega$ , is defined as the set of all elements in $\mathcal{A}^\omega$ that produce accepting runs.   

It is known that given an LTL formula $\varphi$ over $\Pi$, one can construct a DRA $\mathcal{R}_\varphi$ with input set $\mathcal{A}=2^\Pi$ such that $L(\mathcal{R}_\varphi)=L(\varphi)$ \cite{thomas2002automata}. Therefore, verifying whether an infinite word satisfies an LTL formula becomes equivalent to checking the Rabin acceptance condition. There exists well-established algorithms and software for this procedure \cite{klein2006experiments}.

\begin{example}
Consider $\varphi={\bf G F} \pi_1 \wedge {\bf F} \pi_2$, which is an LTL formula over $\Pi=\{\pi_1,\pi_2\}$, stating that ``$\pi_1$ holds infinitely often, and $\pi_2$ eventually holds". The DRA $\mathcal{R}_\varphi$ corresponding to this formula is illustrated in Figure \ref{fig:rabin}. For example, we have $\{\pi_2\}\overline{\{\pi_1,\pi_2\}} \models \varphi$ ($\varphi$ is satisfied), but $\overline{\{\pi_1\}} \not \models \varphi$ ($\varphi$ is violated since $\pi_2$ never appears), and $\{\pi_1\}\overline{\emptyset \{\pi_2\} } \not \models \varphi$ (because $\pi_1$ does not hold infinitely often). 
\end{example}

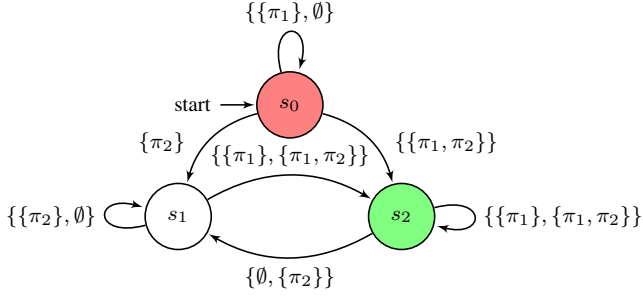
\begin{figure}
\centering
\vspace{0.2in}
\begin{tikzpicture}[>=latex',shorten >=0.5pt,node distance=2.1cm,on grid,auto,semithick]
\tikzset{font={\fontsize{8pt}{8}\selectfont}}
\node[state,initial,fill=red!50] (s0) {$s_0$};
  \node[state,fill=green!50] (s2) [below right=of s0] {$s_2$};
  \node[state] (s1) [below left=of s0] {$s_1$};
  \path[->] (s0) edge [bend right,pos=0.8,above left] node {$\{\pi_2\}$} (s1);
  \path[->] (s0) edge [loop above] node {$\{ \{\pi_1\},\emptyset\}$} (s0);
  \path[->] (s0) edge [bend left,pos=0.8] node {$\{ \{\pi_1,\pi_2\} \}$} (s2);
  \path[->] (s1) edge [bend left,pos=0.5] node {$\{\{\pi_1\},\{\pi_1,\pi_2\} \}$} (s2);
  \path[->] (s1) edge [loop left] node {$\{ \{\pi_2 \},\emptyset \}$} (s1);
  \path[->] (s2) edge [loop right] node {$\{ \{\pi_1 \},\{\pi_1,\pi_2\} \}$} (s2);
  \path[->] (s2) edge [bend left,pos=0.5] node {$\{ \emptyset,\{\pi_2 \} \}$} (s1);
\end{tikzpicture}
\caption{Example 1: DRA corresponding to $\varphi={\bf G F} \pi_1 \wedge {\bf F} \pi_2$, where $F_1=\{s_0\}$ (red), $I_1=\{s_2\}$ (green). Runs that visit the green state infinitely many times and visit the red state finitely many times satisfy $\varphi$.}
\label{fig:rabin}
\end{figure}

\subsection{Transition Systems}
\label{sec:transit}
\begin{definition}
A transition system is defined as the tuple $\mathcal{T}=\left(X, U, \beta, \Pi, O \right)$, where:

\begin{itemize}
\item $X$ is a (possibly infinite) set of states;
\item $U$ is a (possibly infinite) set of control inputs;
\item $\beta$ is a transition function $\beta:X \times U \rightarrow 2^X$;
\item $\Pi=\{\pi_1,\pi_2,\cdots,\pi_{m}\}$ is a finite set of atomic propositions;
\item $O: X \rightarrow 2^\Pi$ is an observation map.
\end{itemize}
\end{definition}
We assume that $\mathcal{T}$ is \emph{non-blocking} in the sense that $|\beta(x,u)| \neq 0$ for all $x \in X, u \in U$. 
\footnote{
If $\mathcal{T}$ is blocking, we can make it non-blocking by adding an additional state $x^{sink}$ such that for all $x \in X, u \in U, |\beta(x,u)| = 0$, we have $x^{sink}=\beta(x,u)$. Also, we add transitions $x^{sink}=\beta(x^{sink},u), \forall u \in U$. In order to prevent blocking, we find a control strategy such that $x^{sink}$ is not reachable.  
}
A transition system $\mathcal{T}$ is \emph{deterministic} if $|\delta(x,u)| = 1, \forall x \in X, \forall u \in U$, and   is \emph{finite} if $X$ and $U$ are finite sets. 
 A trajectory of $\mathcal{T}$ is an infinite sequence of visited states $x_0x_1x_2\cdots$. The infinite {word} produced by such a trajectory is $O(x_0)O(x_1)O(x_2)\cdots$. Note that the alphabet here is $2^\Pi$. 
The set of all infinite words that can be generated by $\mathcal{T}$ is a subset of ${(2^\Pi)}^\omega$.
 
\begin{definition}
A {control strategy} $\Lambda$ is a function $\Lambda:X^* \times U^* \rightarrow U$ that maps the history of visited states and applied controls to an admissible control input, where $u_k=\Lambda(x_0\cdots,x_k,u_0\cdots,u_{k-1}), \forall k\in \mathbb{N}$.   
\end{definition}

\begin{definition}
Given a transition system $\mathcal{T}=\left(X, U, \beta, \Pi, O \right)$, a control strategy $\Lambda$ and a set of initial states $X_0 \in X$, we define:
\begin{equation*}
\begin{array}{ll}
L(\mathcal{T},\Lambda,X_0):=\Big\{ & O(x_0)O(x_1)\cdots \in {(2^\Pi)}^\omega \Big | 
\\
& x_0 \in X_0, x_{k+1} \in \beta(x_k,u_k), k \in \mathbb{N} \Big \},
\end{array}
\end{equation*}
where $u_k=\Lambda(x_0\cdots,x_k,u_0\cdots,u_{k-1})$. 
\end{definition}

%
%
%
    
\subsection{Quotient Transition System}
\label{sec:quotient}

Consider a transition system $\mathcal{T}=\left(X, U, \beta, \Pi, O \right)$. 
A (finite) set $Q \subset 2^X$ is a (finite) partition for $X$ if 1) $\emptyset \not \in Q,$ 2) $\bigcup_{q\in Q}q=X$, and 3) $q \cap q'= \emptyset, \forall q,q' \in Q, q\neq q'$. 
A partition $Q$ is \emph{observation preserving} if for all $q\in Q$, we have $O(x)=O(x'), \forall x,x' \in q$.

\begin{definition}
\label{def:quotient}
Given a transition system $\mathcal{T}=\left(X, U, \beta, \Pi, O \right)$ and an observation preserving partition $Q$ for $X$, the \emph{quotient transition system} is defined as the tuple $\mathcal{T}_Q=\left(Q, U, \beta_Q, \Pi, O_Q \right)$ such that:
\begin{itemize}
\item  for all $q\in Q$, we have $q' \in \beta_Q(q,u)$ if and only if $\exists x \in q$, $\exists x' \in q'$ such that $x' \in \beta(x,u)$;
\item  for all $q\in Q$, we have $O_Q(q)=O(x)$ for any $x \in q$.
\end{itemize} 
\end{definition}
Given a control strategy for the quotient $\Lambda_Q:Q^* \times U^* \rightarrow U$, and a set of initial conditions $Q_0$, we construct $\Lambda^{(Q)}:X^* \rightarrow U$ such that $\Lambda^{(Q)}(x_0\cdots x_k)=\Lambda_{Q}(q_0\cdots q_k)$, $x_i \in q_i$, $0\le i \le k$, $k \in \mathbb{N}$, and $X_0^{(Q)}=\{x_0| x_0 \in q_0, q_0 \in Q_0\}$. It is easy to show that $L(\mathcal{T},\Lambda^{(Q)},X_0^{(Q)}) \subseteq L(\mathcal{T}_Q,\Lambda_Q,Q_0)$, which stems from the fact that $\mathcal{T}_Q$ simulates $\mathcal{T}$. We refer to $L(\mathcal{T}_Q,\Lambda_Q,Q_0) \setminus L(\mathcal{T},\Lambda^{(Q)},X_0^{(Q)})$ as the set of spurious infinite words (SIW). 
In order to have $L(\mathcal{T},\Lambda^{(Q)},X_0^{(Q)}) = L(\mathcal{T}_Q,\Lambda_Q,Q_0)$ (empty SIW), a sufficient condition is that $\mathcal{T}_Q$ and $\mathcal{T}$ are \emph{bisimilar} \cite{belta2017book}.  
For infinite $X$, there is no general guarantee that a finite $Q$ exists such that $\mathcal{T}_Q$ is bisimilar to $\mathcal{T}$. In order to ``shrink" SIW, $Q$ is refined. At the most extreme case, SIW remains nonempty unless $Q=X$. Further details on simulation and bisimulation relations are not required for this paper and the interested reader is referred to the related works in the literature, such as \cite{fernandez1991fly,tabuada2009verification,belta2017book}.

%
%
%
%
%
%
%


\subsection{LTL Control}
\label{sec:ltlcon}
Given a finite transition system $\mathcal{T}=\left(X, U, \beta, \Pi, O \right)$ and an LTL formula $\varphi$ over $\Pi$, we are interested in finding a control strategy $\Lambda$ and the largest set of initial conditions $X_0^{\max}$ such that $L(\mathcal{T},\Lambda,X_0^{\max}) \subseteq L(\varphi)$. In other words, we require $\varphi$ to be satisfied for all trajectories that are allowed by the non-determinism in $\mathcal{T}$.

\begin{definition} 
Given a transition system $\mathcal{T}=\left(X, U, \beta, \Pi, O \right)$ and a DRA $\mathcal{R}_\varphi=(S,s^0,\mathcal{A},\alpha,\Omega)$ corresponding to LTL formula $\varphi$, the product automaton $\mathcal{T}_\varphi^P=\mathcal{T} \otimes \mathcal{R}_\varphi$ is  defined as the tuple $\left(X^P,X^{P,0},U, \beta^P, \Omega^P \right)$, where:

\begin{itemize}
\item $X^P=X \times S$ is the set of product states;
\item $X^{P,0}=\{(x,s^0) | x\in X\}$ is the set of initial product states;
\item $U$ is the set of control inputs;
\item $\beta^P:X^P \times U \rightarrow 2^{X^P}$ is the product transition function, where $x^{P'}\in \delta(x^P,u)$, $x^P=(x,s),x^{P'}=(x',s')$, if and only if $x' \in \beta(x,u)$ and $s'=\alpha(s,O(x))$. 
\item $\Omega^P=\left \{(F^P_1,I^P_1),\cdots,(F^P_r,I^P_r) \right\}$ is a finite set of pairs of sets of states, where $F^P_i=\{(x,s) | x\in X, s \in F_i\},I^P_i=\{(x,s) | x\in X, s \in I_i\} ,i=1,\cdots,r$.    
\end{itemize}
\end{definition}

The product automaton $\mathcal{T}_\varphi^P$ is a (non-deterministic) automaton (with control inputs) capturing both the transitions in $\mathcal{T}$ and the acceptance condition of $\varphi$. The solution to the problem of finding a control strategy to satisfy $\varphi$ is accomplished by solving the Rabin game on the product automaton. The details are not presented here but can be found in \cite{chatterjee2012survey}. It can be shown that the control strategy is memoryless on the product automaton in the form $\Lambda: X \times S \rightarrow U$. In other words, the history of the system is incorporated into the state of the Rabin automaton.
The largest set of admissible initial conditions $X_0^{\max}$ corresponds to the winning region of the Rabin game. 

If the transition system $\mathcal{T}$ is infinite, a finite quotient is constructed. If $U$ is infinite, it can be quantized to obtain a finite set
\footnote{An alternative (better) approach was proposed in \cite{Yordanov2012} for piecewise affine systems, where the authors computed a finite set of sets of control inputs that enabled transitions with minimal non-determinism in the quotient system.}.
It is straightforward to show that if a control strategy satisfying $\varphi$ exists for the finite quotient, it also satisfies $\varphi$ if implemented on the original system. However, unless the quotient and the original transition system are bisimilar, the non-existence of a control strategy for the quotient does not indicate that one does not exist for the original system.   
Hence the approach of using finite quotients may be conservative \cite{tabuada2009verification,belta2017book}.

\section{Problem Formulation and approach}
\label{sec:problem}
We are interested in discrete-time systems of the following form: 
\begin{equation}
\label{eq:system}
\begin{array}{rl}
x^+= &F(x,u,\theta,d), \\
y_i= & \mu_i(x), i=1,\cdots,m,
\end{array}
\end{equation}
where $x\in X$ is the state, $u \in U$ is the control input, $\theta \in \Theta$ represents the parameters of the system, $d \in D$ is the disturbance (adversarial input), $F:X \times U \times \Theta \times D \rightarrow X$ is the system evolution function, and $y_i, i=1,\cdots,m$, are Boolean system outputs, where $\mu_i: X \rightarrow \mathbb{B}$. 
We define the set of atomic propositions $\Pi=\{\pi_1,\cdots,\pi_m\}$ such that $x \models \pi_i \Leftrightarrow \mu_i(x)=\text{True}, i=1,\cdots, m$.
The sets $X,U,\Theta,D$ are the admissible sets for states, controls, parameters and disturbances respectively. All sets may be finite or infinite. System \eqref{eq:system} is finite if $X,U,\Theta,D$ are all finite.  


\begin{example}
A prominent class of systems encountered in adaptive control are parameterized linear systems, where $F(x,u,\theta,d)=A(\theta)x+B(\theta)u+d$. We have $X \subset \mathbb{R}^{n_x}$, $U \subset \mathbb{R}^{n_u}$, $\Theta \subset \mathbb{R}^{n_\theta}$, $D \subset \mathbb{R}^{n_d}$. $A,B$ are matrices with appropriate dimensions that depend on $\theta$. It is also common to assume that the outputs are Boolean evaluations of linear predicates $\mu_i=(r_i^T x \le \rho_i)$, where $r_i \in \mathbb{R}^n,$ and $\rho_i \in \mathbb{R}$. Thus, each proposition $\pi_i$ defines a closed half space in $\mathbb{R}^{n_x}$.  
\end{example} 

  
As mentioned in the introduction, we distinguish between the uncertainty in parameters and disturbances. Disturbances usually have unknown (fast) variations in time. In this paper, we assume that $\theta$ is a constant but its value $\theta^*$ is initially unknown. If we treat the uncertainties in  parameters and disturbances in the same way, we are required to design control strategies that are robust versus  all values in both $\Theta$ and $D$. This approach is severely conservative and often fails to find a solution.  
The key idea of adaptive control is to take advantage of the fact that $\theta^*$ can be (approximately) inferred from the history of the evolution of the system. Therefore, adaptive control is often significantly more powerful than pure robust control and it is also more difficult to design and analyze. 
In engineering applications, parameters are related to the physical attributes of the plant whereas disturbances are related to effects of stochastic nature such as imperfect actuators/sensors and perturbations in the environment. 

\begin{problem}
\label{problem}
Given system \eqref{eq:system} and an LTL formula $\varphi$ over $\Pi$, find a control strategy $\Lambda: X^* \times U^* \rightarrow U$ and a set of initial states $X_0 \subseteq X$ such that all the trajectories of the closed loop system starting from $X_0$ satisfy $\varphi$.  
\end{problem}
\vspace{0.2in}
{
Our aim is to convert Problem \ref{problem} to an LTL control problem described in  Sec.\ref{sec:ltlcon} and use the standard tools for Rabin games. To this end, we need to incorporate adaptation into control synthesis.  
The central tool to any adaptive control technique is parameter estimation. Note that an adaptive control strategy has the form $\Lambda: X^* \times U^* \rightarrow U$, since parameters are estimated using the history of the evolution of the system. 
We take the following approach to convert Problem \ref{problem} into an LTL control problem. We embed system \eqref{eq:system} in a parametric transition system (PTS), which is defined in Sec. \ref{sec:parametric}. We construct a finite adaptive transition system (ATS) from a finite PTS. An ATS is an ordinary transition system as in Sec. \ref{sec:transit}, but parameters are also incorporated into its states and transitions in appropriate way, which is explained in Sec. \ref{sec:finite}. We deal with an infinite PTS by constructing a finite quotient PTS in Sec. \ref{sec:infinite}. 
}

\section{Parametric Transition System}
\label{sec:parametric}

\begin{definition}
A parametric transition system (PTS) is defined as the tuple $\mathcal{T}^\Theta=\left(X, U, \Theta, \gamma, \Pi, O \right)$, where:
\begin{itemize}
\item $X$ is a (possibly infinite) set of states;
\item $U$ is a (possibly infinite) set of control inputs;
\item $\Theta$ is a (possibly infinite) set of parameters;
\item $\gamma$ is a transition function $\gamma:X \times U \times \Theta \rightarrow 2^X$. 
\item $\Pi=\{\pi_1,\pi_2,\cdots,\pi_{m}\}$ is a finite set of atomic propositions;
\item $O: X \rightarrow 2^\Pi$ is an observation map.
\end{itemize}
\end{definition}
The only difference between a PTS and a transition system is that its transitions depend on parameters. Note that if $|\Theta|=1$, a PTS becomes a transition system. 

Now we explain how to represent \eqref{eq:system} in the form of a PTS. The sets $X,U,\Theta$ are inherited from \eqref{eq:system} (which is why we have used the same notation). The transition function $\gamma$ is constructed such that 
\begin{equation}
\gamma(x,u,\theta)=\left \{ F(x,u,\theta,d) \Big| d \in D \right\}.
\label{eq:embed_parameteric}
\end{equation}  
The observation map $O:X \rightarrow 2^\Pi$ is given by:
\begin{equation}
O(x)=\left \{\pi_i \Big|\mu_i(x)=\text{True}, i=1\cdots,m \right\}.
\end{equation} 
Therefore, $\mathcal{T}^\Theta=\left(X, U, \Theta, \gamma, \Pi, O \right)$ captures everything in system \eqref{eq:system}. We refer to $\mathcal{T}^\Theta$ as the \emph{embedding} of \eqref{eq:system}.   
One can interpret a PTS as a (possibly infinite) family of transition systems. The actual transitions are governed by a single parameter $\theta^*$, which is initially unknown to the controller. Therefore, the controller has to find out which transition system is the ground truth. 

\section{Control Synthesis for Finite Systems}
\label{sec:finite}

In this section, we assume the PTS embedding system \eqref{eq:system} is finite.

\subsection{Parameter Estimation}
\begin{definition}
A \emph{parameter estimator} $\Gamma$ is a function 
\begin{equation}
\label{eq:estimator}
\Gamma: X^* \times U^* \rightarrow 2^\Theta_{-\emptyset}
\end{equation} that maps the history of visited states and applied controls to a subset of parameters. We have $\vartheta_k=\Gamma(x_0 \cdots x_k; u_0\cdots u_{k-1})$, where:  
\begin{equation}
\label{eq:estimator}
\vartheta_k = \left \{ \theta \in \Theta \Big | x_{i+1} \in \gamma(x_i,u_i,\theta), 0\le i \le k-1 \right\}.
\end{equation}    
\end{definition}
One can see that the parameter estimator \eqref{eq:estimator} is ``{sound}" in the sense that $\theta^* \in \vartheta_k, \forall k\in \mathbb{N}$. We have $\vartheta_0=\Gamma(x_0)=\Theta$, by definition. Note that our definition of parameter estimator is different from the traditional ones, which are often in the form $X^* \times U^* \rightarrow \Theta$, as they return only an estimate $\hat{\theta}$ rather than the set of all possible parameters. For our formal setup, it is vitally important that the controller take into account all possible ground truth parameters at all times. Otherwise, guaranteeing the specification is impossible. The following proposition enables us to make \eqref{eq:estimator} recursive.


\begin{proposition}
The following recursive relation holds:
\begin{equation}
\vartheta_{k+1}=\left \{ \theta \in \vartheta_k \Big | x_{k+1} \in \gamma(x_k,u_k,\theta) \right\}.
\end{equation}
\end{proposition}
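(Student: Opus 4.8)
The plan is to unfold the definition of $\vartheta_{k+1}$ given in \eqref{eq:estimator} and split the conjunction of constraints at the index $k$. By definition, $\vartheta_{k+1} = \{\theta \in \Theta \mid x_{i+1} \in \gamma(x_i,u_i,\theta),\ 0 \le i \le k\}$. The set of indices $\{0,1,\dots,k\}$ decomposes as $\{0,1,\dots,k-1\} \cup \{k\}$, so the membership condition "$x_{i+1}\in\gamma(x_i,u_i,\theta)$ for all $0\le i\le k$" is logically equivalent to the conjunction of "$x_{i+1}\in\gamma(x_i,u_i,\theta)$ for all $0\le i\le k-1$" and "$x_{k+1}\in\gamma(x_k,u_k,\theta)$".

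The first step is therefore to rewrite $\vartheta_{k+1}$ as $\{\theta\in\Theta \mid (x_{i+1}\in\gamma(x_i,u_i,\theta),\ 0\le i\le k-1)\ \wedge\ x_{k+1}\in\gamma(x_k,u_k,\theta)\}$. The second step is to recognize that the subset of $\Theta$ satisfying only the first clause is exactly $\vartheta_k$ by \eqref{eq:estimator}, so intersecting with the set satisfying the second clause yields $\vartheta_{k+1} = \{\theta\in\vartheta_k \mid x_{k+1}\in\gamma(x_k,u_k,\theta)\}$, which is the claimed recursion. One should also note the edge case $k=0$: here the index range $0\le i\le k-1$ is empty, so the first clause is vacuously true and $\vartheta_0 = \Theta$, consistent with the remark $\vartheta_0 = \Gamma(x_0) = \Theta$ made just before the proposition; the recursion then correctly gives $\vartheta_1 = \{\theta\in\Theta \mid x_1\in\gamma(x_0,u_0,\theta)\}$.

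This is essentially a one-line set-theoretic manipulation, so there is no genuine obstacle; the only thing to be careful about is making the index bookkeeping explicit (the split of $\{0,\dots,k\}$ and the vacuous-quantifier base case) so that the argument is airtight rather than hand-waved. I would present it in two or three sentences with a short displayed chain of equalities, and remark that soundness ($\theta^*\in\vartheta_k$ for all $k$) is inherited immediately since $\theta^*$ satisfies every transition constraint by hypothesis.
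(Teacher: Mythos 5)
Your proposal is correct and is essentially the paper's own argument run in the opposite direction: the paper starts from the right-hand side, substitutes the definition of $\vartheta_k$, and merges the constraints into the index range $0\le i\le k$, whereas you unfold $\vartheta_{k+1}$ and split that range at $i=k$. Your explicit treatment of the $k=0$ base case is a minor (and welcome) addition, but the substance is identical.
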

\begin{proof}
Substitute $\vartheta_k$ from \eqref{eq:estimator}:
\begin{equation*}
\begin{array}{rl}
& \left \{ \theta \in \vartheta_k \Big | x_{k+1} \in \gamma(x_k,u_k,\theta) \right\} \\
= & \left \{ \theta \in \Theta \Big | \theta \in \vartheta_k, x_{k+1} \in \gamma(x_k,u_k,\theta) \right\} \\
= & \left \{ \theta \in \Theta \Big | x_{i+1} \in \gamma(x_i,u_i,\theta), 0\le i \le k \right\} = \vartheta_{k+1}.
\end{array}
\end{equation*}
\end{proof}
\begin{corollary}
The set of estimated parameters never grows: $\vartheta_{k+1} \subseteq \vartheta_k, \forall k\in \mathbb{N} $.
\end{corollary}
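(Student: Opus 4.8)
The plan is to derive this as an immediate consequence of the Proposition just proved, so essentially no new work is required. The Proposition gives the identity $\vartheta_{k+1}=\left\{\theta\in\vartheta_k \mid x_{k+1}\in\gamma(x_k,u_k,\theta)\right\}$, and the right-hand side is, by its very syntactic form, a set obtained by selecting from $\vartheta_k$ exactly those elements satisfying an additional predicate. A set carved out of $\vartheta_k$ by a filtering condition is a subset of $\vartheta_k$, hence $\vartheta_{k+1}\subseteq\vartheta_k$.

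Concretely, I would take an arbitrary $\theta\in\vartheta_{k+1}$, apply the Proposition to rewrite membership in $\vartheta_{k+1}$ as the conjunction ``$\theta\in\vartheta_k$ and $x_{k+1}\in\gamma(x_k,u_k,\theta)$'', and then simply read off the first conjunct to conclude $\theta\in\vartheta_k$. Since $k\in\mathbb{N}$ was arbitrary, the inclusion holds for all $k$, which is the claim. If one prefers an argument not routed through the Proposition, the same conclusion follows directly from the closed-form definition \eqref{eq:estimator}: the defining condition for $\vartheta_{k+1}$ ranges over $0\le i\le k$, while that for $\vartheta_k$ ranges over the strictly smaller index set $0\le i\le k-1$, so every $\theta$ meeting the former condition meets the latter.

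There is no real obstacle here; the only thing to be slightly careful about is the boundary case $k=0$, where $\vartheta_0=\Theta$ and the filtering statement still makes sense because $\Gamma(x_0)=\Theta$ by definition, so $\vartheta_1\subseteq\vartheta_0=\Theta$ holds trivially. I would state the proof in one or two lines, citing the Proposition, and note (or leave implicit) that monotonicity of the estimate is what justifies maintaining $\vartheta_k$ incrementally and what underlies the finite-memory/termination arguments used later when the adaptive transition system is built.

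\begin{proof}
Let $k\in\mathbb{N}$ be arbitrary. By the Proposition,
\begin{equation*}
\vartheta_{k+1}=\left\{\theta\in\vartheta_k \;\middle|\; x_{k+1}\in\gamma(x_k,u_k,\theta)\right\},
\end{equation*}
and the set on the right-hand side consists of elements of $\vartheta_k$ (restricted by an additional condition). Hence $\vartheta_{k+1}\subseteq\vartheta_k$. Since $k$ was arbitrary, $\vartheta_{k+1}\subseteq\vartheta_k$ for all $k\in\mathbb{N}$.
\end{proof}
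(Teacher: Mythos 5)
Your proof is correct and follows exactly the route the paper intends: the corollary is an immediate consequence of the Proposition, since the recursive formula exhibits $\vartheta_{k+1}$ as a subset of $\vartheta_k$ cut out by an additional membership condition. The paper gives no separate proof precisely because this one-line filtering argument is all that is needed.
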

Therefore, we obtain a recursive parameter estimator $\Gamma_{rec}:2^\Theta_{-\emptyset} \times X \times U \times X  \rightarrow 2^\Theta_{-\emptyset}$ as $\vartheta_{k+1}=\Gamma_{rec}(\vartheta_k,x_k,u_k,x_{k+1})$. Note that $\Gamma_{rec}$ is deterministic.


\begin{figure*}
\centering
\begin{tikzpicture}[>=latex',shorten >=1pt,node distance=1.5cm,on grid,auto]
  \node[state] (s0) {$x_1$};
  \node[state] (s1) [below right=of s0] {$x_2$};
  \node[state] (s2) [below left=of s0] {$x_3$};
  \path[->] (s0) edge node {$u_1$} (s1);
  \path[->] (s0) edge [bend left] node {$u_2$} (s2);
  \path[->] (s1) edge [loop right] node {$u_1,u_2$} (s1);
  \path[->] (s2) edge [loop left] node {$u_1$} (s2);
  \path[->] (s2) edge [bend left] node {$u_2$} (s0);
\end{tikzpicture}
~~~~~~~
\begin{tikzpicture}[>=latex',shorten >=1pt,node distance=1.5cm,on grid,auto]
  \node[state] (s0) {$x_1$};
  \node[state] (s1) [below right=of s0] {$x_2$};
  \node[state] (s2) [below left=of s0] {$x_3$};
  \path[->] (s0) edge node {$u_1$} (s1);
  \path[->] (s0) edge [bend right] node {$u_2$} (s2);
  \path[->] (s1) edge [] node {$u_1$} (s2);
  \path[->] (s1) edge [loop right] node {$u_2$} (s1);
  \path[->] (s2) edge [loop left] node {$u_1,u_2$} (s2);
\end{tikzpicture}
\\
{$\mathcal{T}^{\theta_1}$ \hspace{2.6in} $\mathcal{T}^{\theta_2}$}
\\
\vspace{0.2in}
\begin{tikzpicture}[>=latex',shorten >=1pt,node distance=3.2cm,on grid,auto]
  \node[state] (s1) {$x_1,\{\theta_1,\theta_2\}$};
  \node[state] (s2) [right=of s1] {$x_2,\{\theta_1,\theta_2\}$};
  \node[state] (s3) [left=of s1] {$x_3,\{\theta_1,\theta_2\}$};
  \node[state] (s11) [above=of s1] {$x_1,\{\theta_1\}$};
  \node[state] (s21) [above right=of s1] {$x_2,\{\theta_1\}$};
  \node[state] (s31) [above left=of s1] {$x_3,\{\theta_1\}$};
  \node[state] (s12) [below=of s1] {$x_1,\{\theta_2\}$};
  \node[state] (s22) [below right=of s1] {$x_2,\{\theta_2\}$};
  \node[state] (s32) [below left=of s1] {$x_3,\{\theta_2\}$};
  \path[->] (s1) edge node {$u_1$} (s2);
  \path[->] (s1) edge node {$u_2$} (s3);
  \path[->] (s2) edge [loop right] node {$u_2$} (s2);
  \path[->] (s2) edge node {$u_1$} (s21);
  \path[->] (s2) edge [sloped] node {$u_1$} (s32);
  \path[->] (s3) edge [loop left] node {$u_1$} (s3);
  \path[->] (s3) edge node {$u_2$} (s32);
  \path[->] (s3) edge [bend right, sloped] node {$u_2$} (s11);
  \path[->] (s11) edge [] node {$u_1$} (s21);
        \path[->] (s11) edge [bend left] node {$u_2$} (s31);
  \path[->] (s21) edge [loop right] node {$u_1,u_2$} (s21);
  \path[->] (s31) edge [loop left] node {$u_1$} (s31);
  \path[->] (s31) edge [sloped,bend left] node {$u_2$} (s11);
  \path[->] (s12) edge [] node {$u_1$} (s22);
        \path[->] (s12) edge [] node {$u_2$} (s32);
  	\path[<-] (s32) edge [] node {$u_1$} (s22);
  \path[->] (s22) edge [loop right] node {$u_2$} (s22);
  \path[->] (s32) edge [loop left] node {$u_1,u_2$} (s32);

\end{tikzpicture}
\\
{$\mathcal{T}^{adp}$}
\caption{Example 3: [Top] A PTS with two possible parameters $\theta_1,\theta_2$, and the corresponding transition systems [Bottom] The corresponding ATS }
\label{fig:ATS}
\end{figure*}
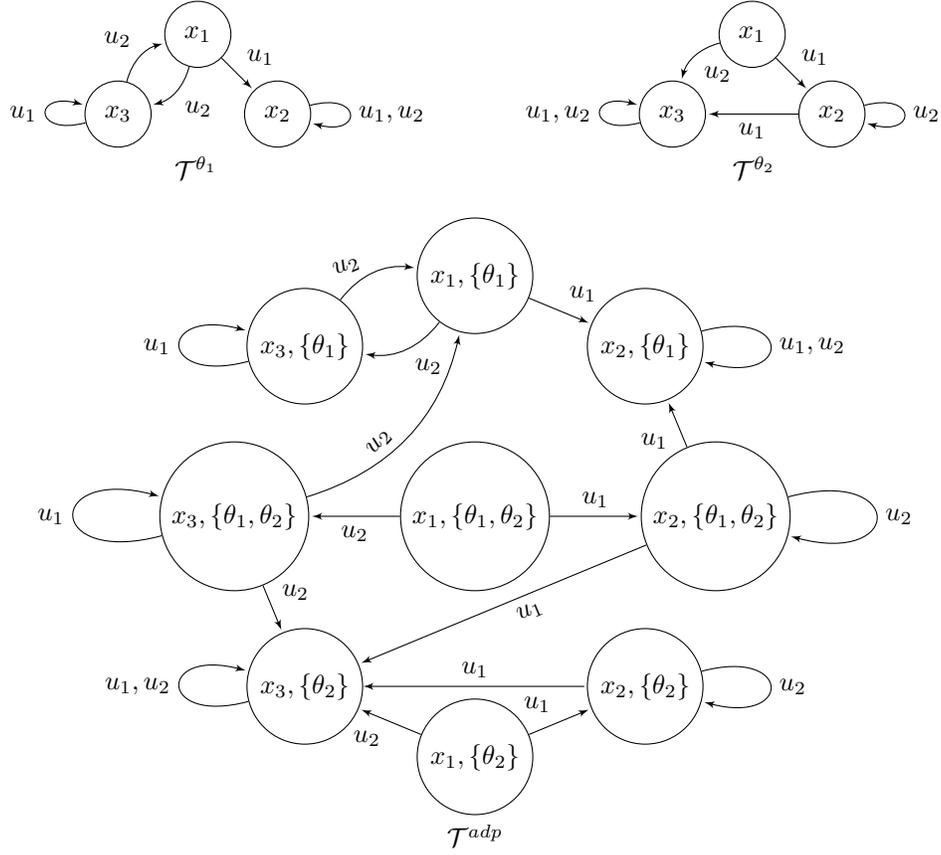

\subsection{Adaptive Transition System}
As mentioned in the introduction, a primary challenge of provably correct adaptive control is coupling parameter estimation and control synthesis. In order to combine these two, we provide the following definition.   

\begin{definition}
Given a PTS $\mathcal{T}^\Theta=\left(X, U, \Theta, \gamma, \Pi, O \right)$, we define the adaptive transition system (ATS) as the tuple $\mathcal{T}^{adp}=\left(X^{adp}, U, \gamma^{adp}, \Pi, O^{adp} \right)$, where $U,\Pi$ are inherited from $\mathcal{T}^{\Theta}$ with the same meaning and
\begin{itemize}
\item $X^{adp} \subseteq X \times 2^\Theta_{-\emptyset}$ is the set of states;
\item $\gamma^{adp}:X^{adp} \times U \rightarrow 2^{X^{adp}}$ is the transition function, where we have $(x',\vartheta') \in \gamma^{adp}((x,\vartheta),u)$ if and only if $x' \in \gamma(x,u)$ and $\vartheta' = \Gamma_{rec}(\vartheta,x,u,x')$;
\item $O^{adp}:X^{adp} \rightarrow 2^\Pi$ is the observation function where $O^{adp}(x,\vartheta)=O(x), \forall x\in X, \vartheta \in 2^\Theta_{-\emptyset}$.
\end{itemize} 

\begin{example}
Consider a PTS with $X=\{x_1,x_2,x_3\},U=\{u_1,u_2\},$ and $\Theta=\{\theta_1,\theta_2\}.$ The transition systems corresponding to $\theta_1$ and $\theta_2$ are illustrated in Fig. \ref{fig:ATS} [top]. The ATS corresponding is shown in Fig. \ref{fig:ATS} [Bottom].
\end{example}

\end{definition}
The number of states in the ATS is upper-bounded by $|X|(2^{|\Theta|}-1)$, which shows an exponential explosion with the number of parameters. Fortunately, not all states in $X \times 2^\Theta_{-\emptyset}$ are reachable from the set $\{(x,\theta) | x \in X,\theta \in \Theta \}$, which is the set of possible initial states in the ATS. Algorithm \ref{alg:ats} constructs the ATS consisting of only these reachable states.

\begin{algorithm}
\caption{Procedure for Constructing ATS from a PTS}
\label{alg:ats}
\begin{algorithmic}[0]
\Require{$\mathcal{T}^\Theta=\left(X, U, \Theta, \gamma, \Pi, O \right)$}

\State{$X^{adp,new}=\{ (x, \Theta) | x \in X \}$}

\State{$X^{adp}=X^{adp,new}$}

   \While{$X^{adp,new} \neq \emptyset$ }
%

      \State $ X^{adp,new} \gets \emptyset$

      \For {$ (x,\vartheta) \in  X^{adp}$}
      	\For {$ u \in  U$}
			\State $ \gamma^{adp}((x,\vartheta),u)=\emptyset$
			\State $ \vartheta'=\emptyset$
			\For {$\theta \in \vartheta$}
      			\For {$x' \in \gamma(x,u,\vartheta)$}
					\For {$\theta' \in \vartheta$}
     					\If  { $x' \in \gamma(x,u,\theta')$  }
							\State {$ \vartheta' \gets \vartheta' \cup \theta'$}
      					\EndIf
      
      				\EndFor
					\State{$\gamma^{adp}((x,\vartheta),u) \gets \gamma^{adp}((x,\vartheta),u) \cup (x',\vartheta')$}
					\If {$(x',\vartheta') \not \in X^{adp}$}
					\State{$X^{adp,new} \gets X^{adp,new} \cup (x',\vartheta')$}
					\State{$X^{adp} \gets X^{adp} \cup (x',\vartheta')$}
					\State{$O^{adp}(x',\vartheta')=O(x')$}
					\EndIf
      \EndFor
      \EndFor
      \EndFor
      \EndFor
   \EndWhile
\State \textbf{return} $\mathcal{T}^{adp}=\left(X^{adp}, U, \gamma^{adp}, \Pi, O^{adp} \right)$
\end{algorithmic}
\end{algorithm}

\subsection{Control Synthesis}

Finally, given an ATS $\mathcal{T}^{adp}$ and an LTL formula $\varphi$, we construct the product automaton $\mathcal{T}^{adp} \otimes \mathcal{R}_\varphi$ as explained in Sec. \ref{sec:ltlcon}, and find the memoryless control strategy on $\mathcal{T}^{adp} \otimes \mathcal{R}_\varphi$ by solving the Rabin game. We also find the largest set of admissible initial conditions $X_0^{adp,\max}$ as the winning region of the Rabin game. In order to find $X_0^{\max}$, we perform the following projection:
\begin{equation}
X_0^{\max}=\left\{x_0 \Big| (x_0,\Theta) \in X_0^{adp,\max} \right\}.
\end{equation}
The adaptive control strategy takes the memoryless form $\Lambda: X \times 2^\Theta_{-\emptyset} \times S \rightarrow U$, which maps the current state in the PTS, the set of current possible ground truth parameters and the state in the Rabin automaton to an admissible control action.

\begin{theorem}
Given a finite system \eqref{eq:system}, an initial condition $x_0 \in X$, an LTL formula over $\Pi$, there exists a control strategy $\Lambda^*: X^* \times U^* \rightarrow U$   such that $O(x_0)O(x_1)\cdots \models \varphi$, $\forall \theta \in \Theta, \forall d_k \in D$, $x_{k+1}=F(x_k,u_k,\theta,d_k), \forall k\in \mathbb{N}$, if and only if $x_0 \in X_0^{\max}$. . 
 \end{theorem}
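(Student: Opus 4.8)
The plan is to reduce the claim to the correctness and completeness of the Rabin-game solution on the product automaton $\mathcal{T}^{adp}\otimes\mathcal{R}_\varphi$, exploiting the fact that the ATS faithfully represents the joint evolution of the system state and the parameter estimate. First I would establish a bijective correspondence between (i) trajectories of the closed-loop system \eqref{eq:system} under a strategy $\Lambda:X^*\times U^*\to U$, for a fixed $\theta\in\Theta$ and a fixed disturbance sequence $(d_k)$, and (ii) trajectories of $\mathcal{T}^{adp}$ starting from $(x_0,\Theta)$ under the induced strategy on $X^{adp}$. The forward direction uses \eqref{eq:embed_parameteric}: each transition $x_{k+1}=F(x_k,u_k,\theta,d_k)$ is a transition $x_{k+1}\in\gamma(x_k,u_k,\theta)$ of the PTS, and by the soundness of $\Gamma$ (Proposition 1 and its Corollary) the estimate $\vartheta_k=\Gamma(x_0\cdots x_k;u_0\cdots u_{k-1})$ always contains $\theta$, so $(x_k,\vartheta_k)$ is a well-defined, reachable state of $\mathcal{T}^{adp}$ with $\vartheta_{k+1}=\Gamma_{rec}(\vartheta_k,x_k,u_k,x_{k+1})$. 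Since $O^{adp}(x_k,\vartheta_k)=O(x_k)$, the word produced is unchanged. Conversely, any ATS trajectory projects to a system trajectory: if $(x_{k+1},\vartheta_{k+1})\in\gamma^{adp}((x_k,\vartheta_k),u_k)$ then $x_{k+1}\in\gamma(x_k,u_k,\theta')$ for some $\theta'$, but to match a \emph{specific} ground truth $\theta$ one restricts attention to trajectories consistent with that $\theta$ — which is exactly the set of ATS runs in which every $\vartheta_k$ contains $\theta$, and these are in bijection with the runs of $\mathcal{T}^\theta$ under non-determinism in $D$.

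Next I would invoke the standard result quoted in Sec.~\ref{sec:ltlcon}: on the product automaton $\mathcal{T}^{adp}\otimes\mathcal{R}_\varphi$, solving the Rabin game yields a memoryless winning strategy $\Lambda:X\times 2^\Theta_{-\emptyset}\times S\to U$ and a winning region whose projection is $X_0^{adp,\max}$; a run is won iff the produced word is in $L(\mathcal{R}_\varphi)=L(\varphi)$. Because the product construction tracks $s_{k+1}=\alpha(s_k,O(x_k))$ and the Rabin acceptance condition depends only on $Inf(\zeta)$ of the $\mathcal{R}_\varphi$-component, and because the ATS-to-system correspondence above preserves the generated word, a strategy is winning from $(x_0,s^0)$ (equivalently $x_0\in X_0^{adp,\max}$) iff, lifted to the system via the memoryless form $\Lambda:X\times 2^\Theta_{-\emptyset}\times S\to U$, it forces $O(x_0)O(x_1)\cdots\models\varphi$ for every $\theta\in\Theta$ and every disturbance sequence. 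This gives the ``if'' direction: take $\Lambda^*$ to be the lift of the Rabin-game winner from states reachable from $(x_0,\Theta)$.

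For the ``only if'' (completeness) direction, I would argue contrapositively: suppose $x_0\notin X_0^{\max}$, i.e. $(x_0,\Theta)\notin X_0^{adp,\max}$. By completeness of the Rabin-game solution (the winning region is exactly the set of states from which the controller has a winning strategy against the adversary), for \emph{every} strategy on the product there is an adversary response — a choice of which non-deterministic successor to take at each step — producing a rejected run. One then has to translate this adversarial ATS run back into a choice of a ground-truth parameter $\theta$ and a disturbance sequence $(d_k)$ realizing it in \eqref{eq:system}, so that the resulting system trajectory violates $\varphi$. The key point is that any strategy $\Lambda:X^*\times U^*\to U$ on the system induces a strategy on $\mathcal{T}^{adp}$ (since the reachable ATS state $(x_k,\vartheta_k)$ is a deterministic function of the history via $\Gamma$), so no system strategy can do better than the best product strategy.

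The main obstacle is precisely this last translation step: showing that the adversary's non-deterministic choices in $\mathcal{T}^{adp}$ can always be realized by an \emph{honest}, time-invariant ground truth $\theta^*\in\Theta$ together with admissible disturbances, rather than by an adversary who ``changes $\theta$'' along the run. The resolution hinges on the structure of $\gamma^{adp}$: a transition is enabled only when $x_{k+1}\in\gamma(x_k,u_k,\theta)$ for some $\theta$ in the current estimate $\vartheta_k$, and $\vartheta_k=\bigcap$ of the one-step consistency sets, so any finite adversarial prefix keeps $\vartheta_k$ nonempty and hence admits at least one $\theta$ consistent with the \emph{entire} prefix; a compactness/König-style argument (or, in the finite case, a pigeonhole over the finite set $\Theta$ along the infinite losing run) then extracts a single $\theta^*\in\bigcap_k\vartheta_k$ consistent with the whole infinite losing trajectory, and the corresponding $d_k$ are recovered from \eqref{eq:embed_parameteric} since $x_{k+1}\in\gamma(x_k,u_k,\theta^*)=\{F(x_k,u_k,\theta^*,d)\mid d\in D\}$. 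I would flag that in the genuinely infinite-$\Theta$ case one needs $\bigcap_k\vartheta_k\neq\emptyset$, which holds here because the theorem restricts to finite systems, so $\Theta$ is finite and the nested nonempty sets $\vartheta_k$ stabilize.
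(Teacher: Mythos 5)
Your proposal is correct and follows essentially the same route as the paper's (very brief) proof sketch: completeness of Rabin games on the finite product, plus the fact that the ATS captures every joint behavior of the system and the estimator. You actually supply the one nontrivial detail the paper leaves implicit — that an adversarial losing run in $\mathcal{T}^{adp}$ is realizable by a single time-invariant $\theta^*\in\bigcap_k\vartheta_k$ (nonempty since the nested $\vartheta_k$ stabilize over finite $\Theta$) together with admissible disturbances recovered from \eqref{eq:embed_parameteric} — which is exactly the right justification for the paper's claim that $x_0\notin X_0^{\max}$ yields a violating $\theta$ and disturbance sequence.
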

\begin{proof}
(sketch) The completeness property follows from two facts. First, the solutions to Rabin games on finite automata are complete. Second, every possible behavior of a finite PTS embedding \eqref{eq:system} and parameter estimator \eqref{eq:estimator} is captured in the ATS. If $x_0 \not \in X_0^{\max}$, then it can be shown that there exists a $\theta \in \Theta$ and a disturbance sequence $d_0d_1\cdots$ such that there does not exist any control strategy to satisfy the LTL specification. 
\end{proof}

\section{Control Synthesis for Infinite Systems}
\label{sec:infinite}
In this section, we assume that PTS embedding \eqref{eq:system} is not finite, which means that at least one of the sets $X,U,\Theta$ is infinite. We provide the general solution for the case when all sets are infinite. We note that the approach in this section is still preliminary and we leave further investigation to our future work.

We consider a finite observation preserving (see Sec. \ref{sec:quotient}) partition $Q_X=\{q_X^1,\cdots,q_X^{p_X} \}$ for $X$ and a finite partition $Q_\Theta=\{q_\Theta^1,\cdots,q_\Theta^{p_\Theta} \}$ for $\Theta$. We also quantize $U$ to obtain a finite  $U_{\text{qtz}}=\{u_{qtz}^1,\cdots,u_{qtz}^{p_u} \}$. In this paper, we do not consider any particular guideline for how to partition and leave this problem to our future work. In general, the finer the partitions, the less conservative the method is with a price of higher computational effort. 
``Smart" partition refinement procedures were studied in \cite{yordanov2013formal,nilsson2014incremental}.  

Once partitions and quantizations are available, we compute the transitions. We denote the successor (post) of set $q_X$, under parameter set $q_\Theta$ and control $u$ by 
\begin{equation}
\label{eq:post}
\small
\text{Post}(q_X,q_\Theta,u) := \Big\{ x \in X \big | \exists x \in q_X, \exists \theta \in q_\Theta, x \in \gamma(x,\theta,u) \Big\}. 
\end{equation} 
A computational bottleneck is performing the post computation in \eqref{eq:post}. For additive parameters, the post computation is exact for piecewise affine systems using polyhedral operations \cite{Yordanov2012}. For multiplicative parameters, an over-approximations of post can be computed \cite{yordanov2008formal}, which introduces further conservativeness but retains correctness.    
Finally, we construct the quotient PTS from the infinite PTS. The procedure is outlined in Algorithm \ref{alg:quotient}.

\begin{algorithm}
\caption{Procedure for Constructing quotient PTS from infinite PTS}
\begin{algorithmic}[0]
\Require{$\mathcal{T}^\Theta=\left(X, U, \Theta, \gamma, \Pi, O \right)$}
\Require{$Q_X,Q_\Theta,U_{\text{quantized}}$}
\For{$q_X \in Q_X$}
	\State{$O^Q(q_X)=O(x)$ for some $x \in q_X$}
	\For{$q_\Theta \in Q_\Theta$}
		\For{$u_{qtz} \in U_{qtz}$}
			\State{$X_{\text{post}}=\text{Post}(q_X,q_\Theta,u)$}
			\State{$\gamma^{Q}(q_X,u_{qtz},q_\Theta)=\emptyset$}
			\For{$q_\Theta' \in Q_\Theta$}
			\If{$X_{\text{post}} \cap q'_\Theta \neq \emptyset$}
			\State{$\gamma^{Q}(q_X,u_{qtz},q_\Theta) \gets \gamma^{Q}(q_X,u_{qtz},q_\Theta) \cup q_\Theta'$}
			\EndIf
			\EndFor
		\EndFor
	\EndFor
\EndFor
\State \textbf{return} $\mathcal{T}^{Q,\Theta}=\left(Q, U_{\text{quantized}}, Q_\Theta, \gamma^{Q}, \Pi, O^Q \right)$
\end{algorithmic}
\label{alg:quotient}
\end{algorithm}


\section{Case Studies}
\label{sec:case}
We present two case studies. The first one is a simple finite deterministic system. The second case study involves a linear parameterized system that is infinite and non-deterministic due to the presence of additive disturbances.   

\subsection{Persistent Surveillance }
\begin{figure*}[t]
\begin{center}
\includegraphics[width=0.29\textwidth]{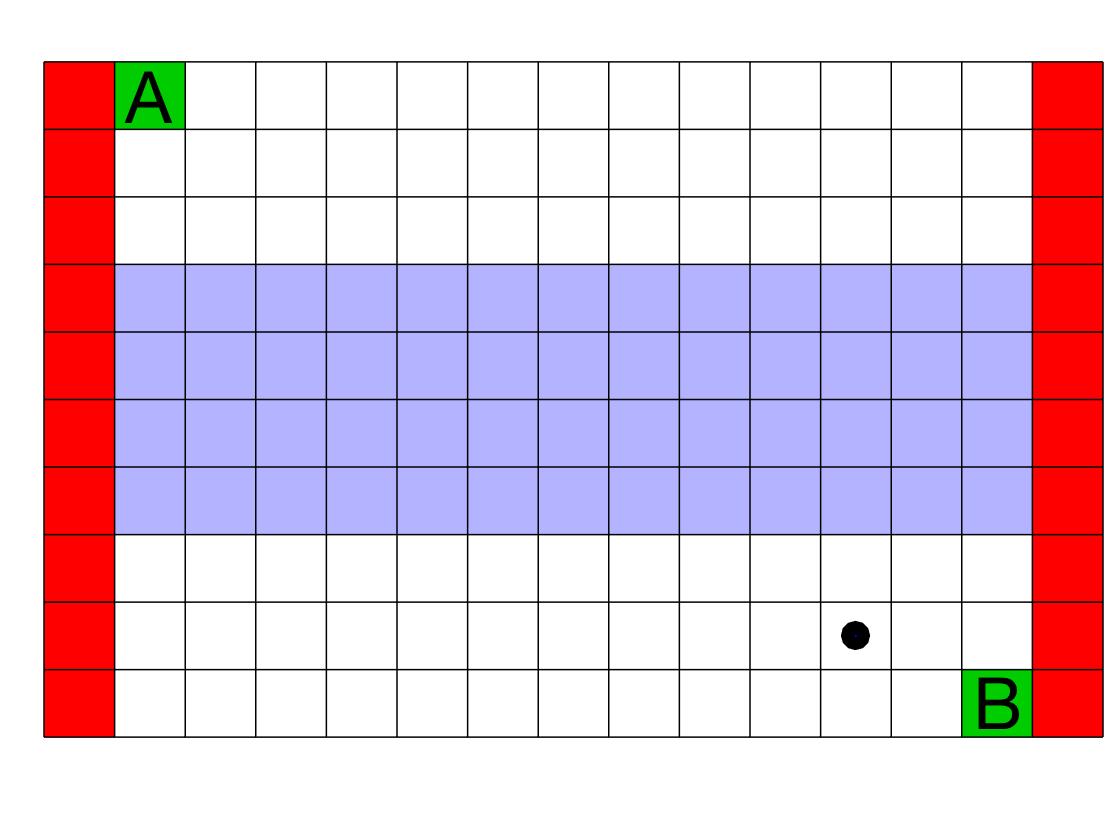}~
\includegraphics[width=0.29\textwidth]{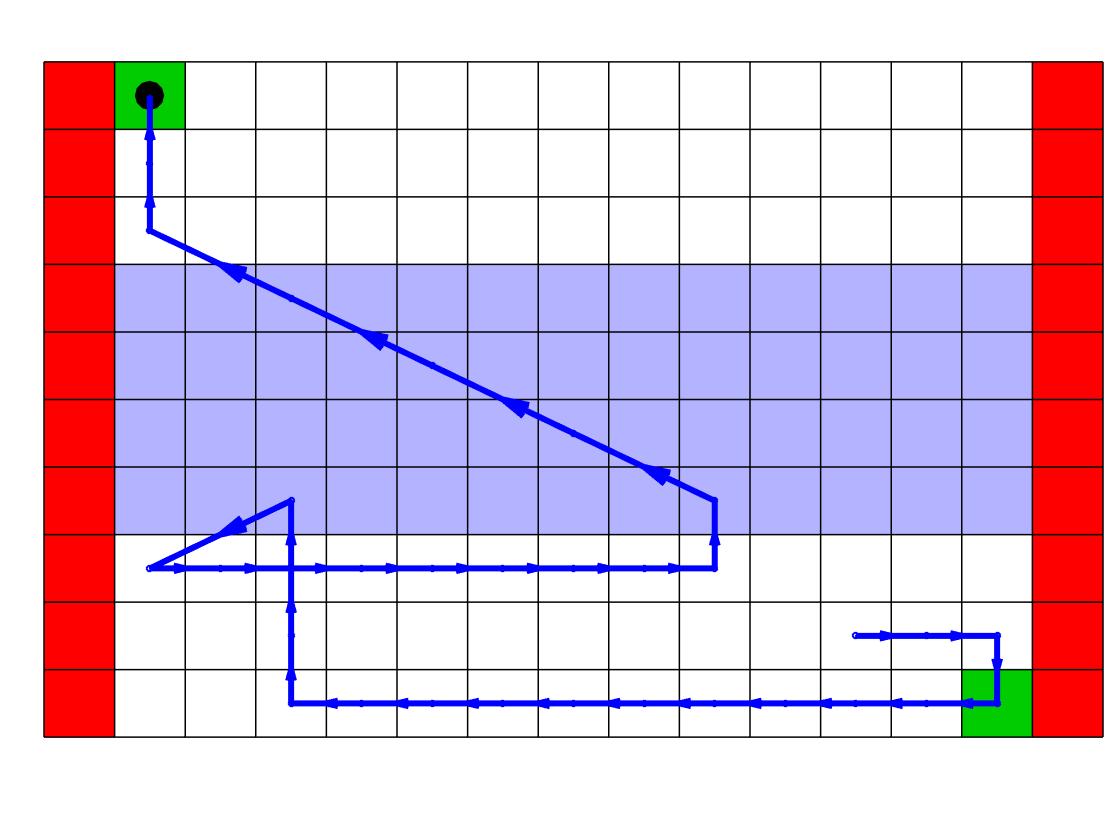}~\includegraphics[width=0.29\textwidth]{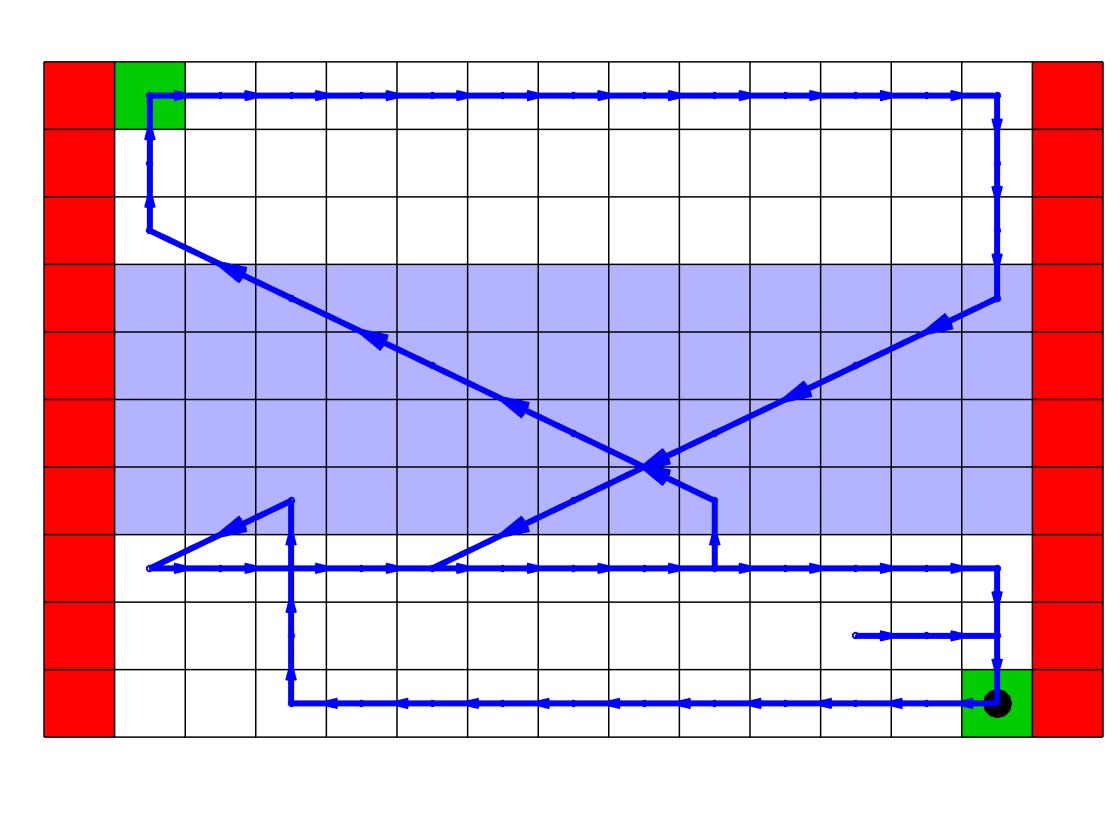}

\caption{Case Study 1: (Left): The Robot (shown in black) and its environment. 
(Middle): Snapshots of the executed Motion at time $k=33$, and (Right) $k=62$. The robot satisfies the specification.  
}
\label{fig:robot}
\end{center}
\end{figure*}

We consider a robot motion planning problem. The environment is modeled as a finite number of cells illustrated in Fig. \ref{fig:robot}. Each cell corresponds to a state in $X$. We have $|X|=150$. The set of control inputs is given by $U=\{$ {\bf left, right, up, down}$\}$, where the transition enabled by each input corresponds to its unambiguous meaning. There exists an constant drift in the horizontal direction in the purple region, but its direction to left or right and its intensity are unknown. The set of possible drifts is $\Theta=\{+2,+1,0,-1,-2\}$, where positive sign corresponds to the left direction. At each time, if the robot is in a purple cell, the drift is added to its subsequent position. For example, if the robot applies $u=${\bf right}, and $\theta^*=2$, the robot actually ends up in a cell to the left. Similarly, if $u=${\bf up} and $\theta^*=-2$, the robot moves a cell up and two cells to the right. The red cells are ``unsafe" regions that must be avoided, and the green cells $A,B$ are ``interesting" regions, which have to be persistently visited.  The LTL formula describing this specification is:  
\begin{equation*}
\varphi= {\bf G} {\bf F} A ~\wedge ~{\bf G} {\bf F} B ~\wedge~ {\bf G} (\neg \text{unsafe}).
\end{equation*}
We implemented the procedure outlined in Sec. \ref{sec:finite}. It is worth to note that there does not exist a pure robust control solution to this problem. In other words, if the robot ignores estimating the drift, it can not find a control strategy. For example, if the robot enters the purple region around the middle and persistently applies ${\bf up}$, a maximum drift in either direction can drive the robot into the unsafe cells before it exits the purple region. Therefore, the only way the robot can fulfill the specification is to learn the drift. The robot first enters the drifty region to find out its value and then moves back and re-plans its motion. Notice that this procedure is fully automated using the solution of the Rabin game on the product $\mathcal{T}^{adp} \otimes \mathcal{R}_\varphi$. Two snapshots of the executed motion for the case $\theta^*=+2$ are shown in Fig. \ref{fig:robot}.

\subsection{Safety Control}
\label{sec:safety}
Consider a one-dimensional linear system of the following form:
\begin{equation}
x^+=(1+\theta_1)x+\theta_2u+\theta_3+d,
\end{equation}
where $\theta_1 \in [-0.5,0.5]$, $\theta_2 \in [1,2]$, and $\theta_3 \in [-0.2,0.2]$ are fixed parameters, and $d\in D$, is the additive disturbance, $D= [-0.1,0.1]$. The set of admissible control inputs is $U=[-1,1]$. We desire to restrict $x$ to the $[-1,1]$ interval for all times, which is described by the following LTL formula:  
\begin{equation*}
\varphi= {\bf G} (x \le 1) \wedge {\bf G} (x \ge -1).
\end{equation*}
We have $\Theta=[-0.5,0.5] \times [1,2] \times [-0.2,0.2]$. 
We partitioned the intervals of $\theta_1$, $\theta_2$, $\theta_3$, and $X$ into 2,2,4, and 10 evenly spaced intervals, respectively. Thus, we have partitioned $\Theta$ into $16$ cubes ($|Q_\Theta|=16$) and $X$ into 10 intervals ($|Q_X|=10$). $U$ is quantized to obtain $U_{qtz}=\{-1,-0.8,\cdots,0.8,1\}$. We implemented Algorithm \ref{alg:quotient} to obtain the quotient PTS and Algorithm \ref{alg:ats} to find the corresponding ATS. The computation times were 0.1 (Algorithm \ref{alg:quotient}) and 152 (Algorithm \ref{alg:ats}) seconds on a 3.0 GHz MacBook Pro.   
Even though $|X \times 2_{-\emptyset}^{Q_\Theta}|=655350$, the number of reachable states obtained from Algorithm \ref{alg:ats} was 14146.    

We solved the safety game on the ATS, which took less than a second and found a winning region containing 14008 states. The winning region in the state-space is $X_0=[-0.6,0.6]$. Since the solution is conservative, $X_0^{\max}$ may be larger if a finer partitioning is used. We also found that the winning region is empty if we had sought a pure robust control strategy. 
 We simulated the system for 100 time steps starting from $x_0=0$. The values of disturbances at each time are chosen randomly with a uniform distribution over $D$. We observe that the specification is satisfied, and the sets given by the parameter estimator shrink over time and always contain the ground truth parameter, which in this case is $\theta_1^*=0.45$, $\theta_2^*=1.11$, $\theta_3^*=-0.18$. The results are shown in Fig. \ref{fig:safety}.

\begin{figure}[t]
\centering
\includegraphics[height=0.18\textwidth]{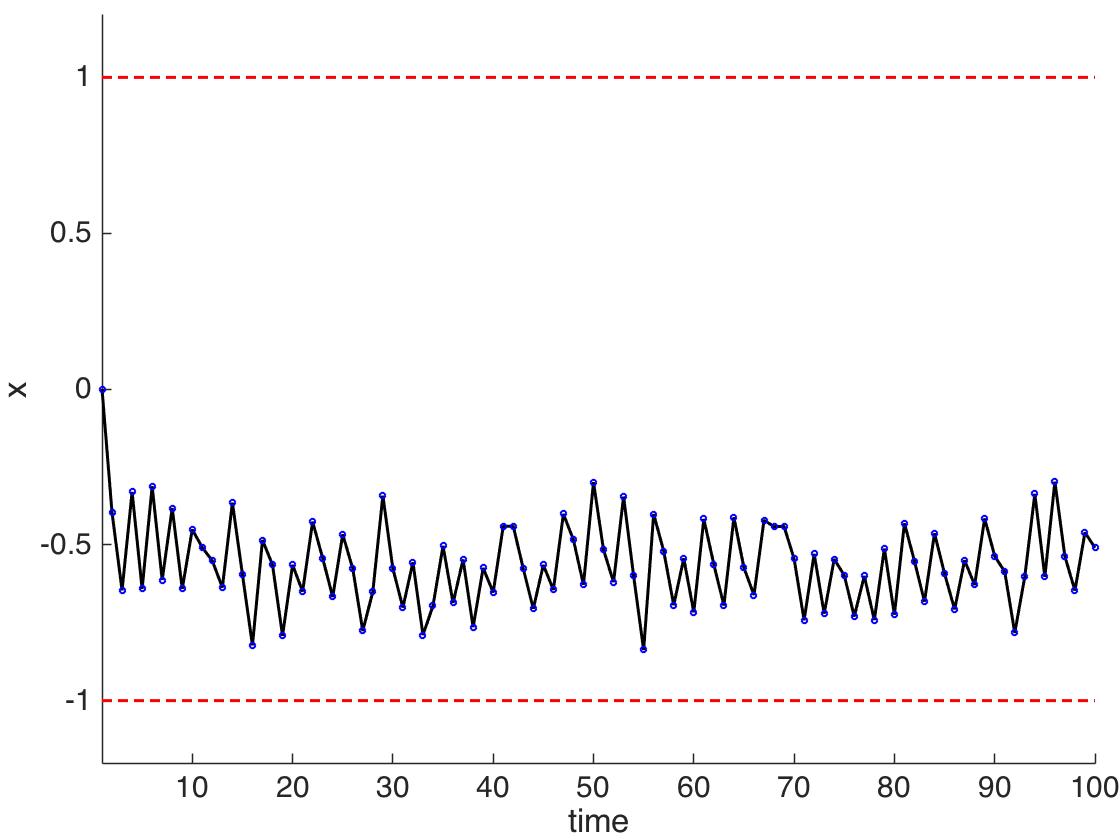}
\caption{Case Study 2: trajectory of the system versus time, which is always between $-1$ and $1$. }
\end{figure}

\begin{figure}[t]
\centering
\vspace{0.2in}
\includegraphics[width=0.24\textwidth]{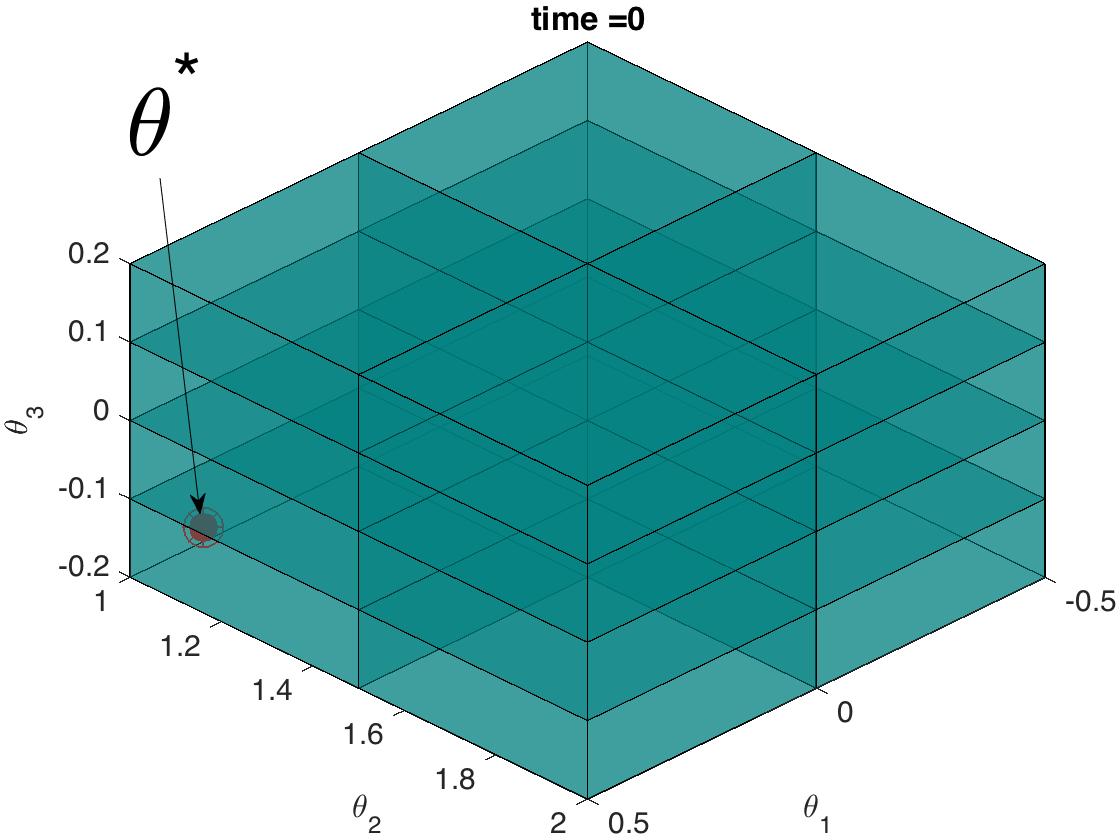}\includegraphics[width=0.24\textwidth]{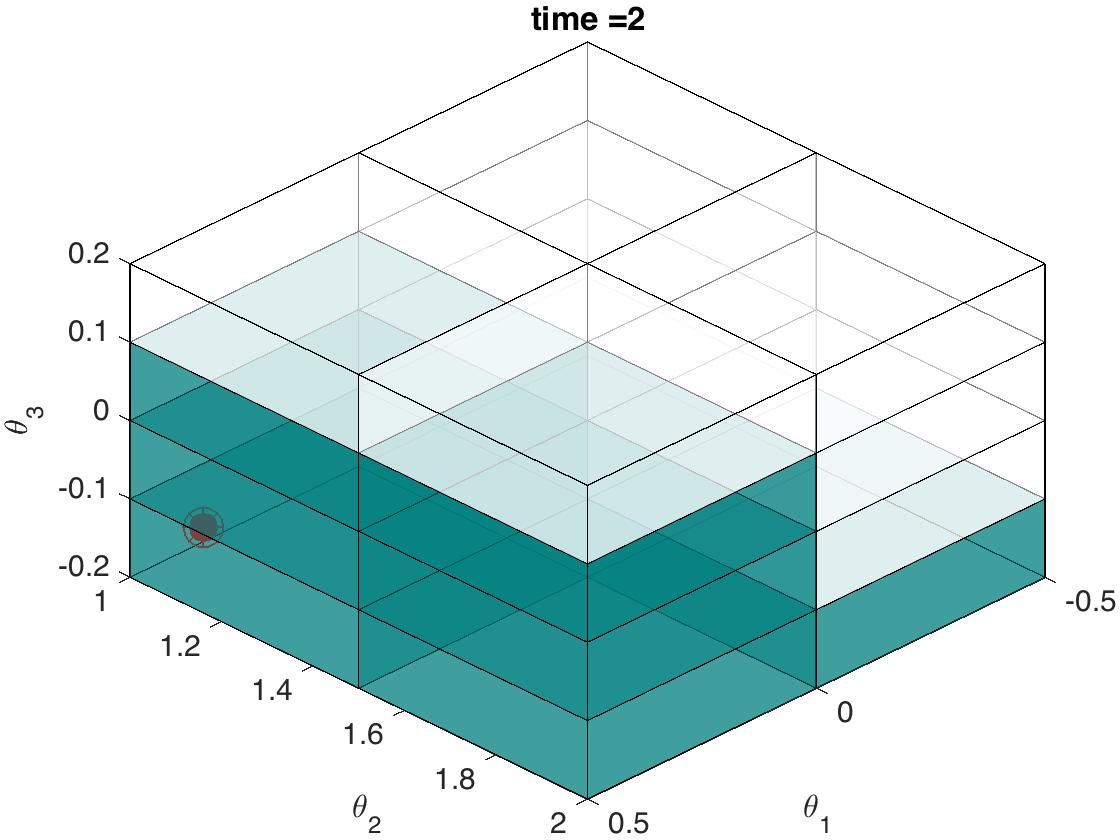}
\\~\includegraphics[width=0.24\textwidth]{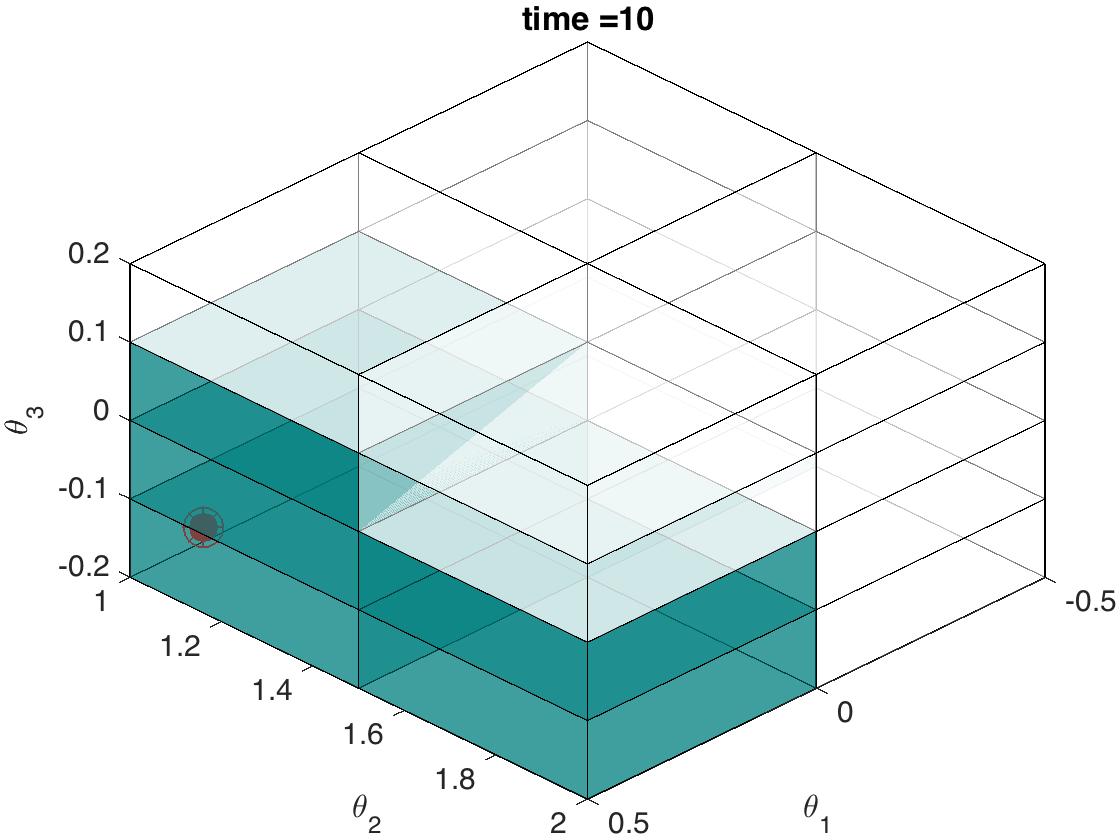}\includegraphics[width=0.24\textwidth]{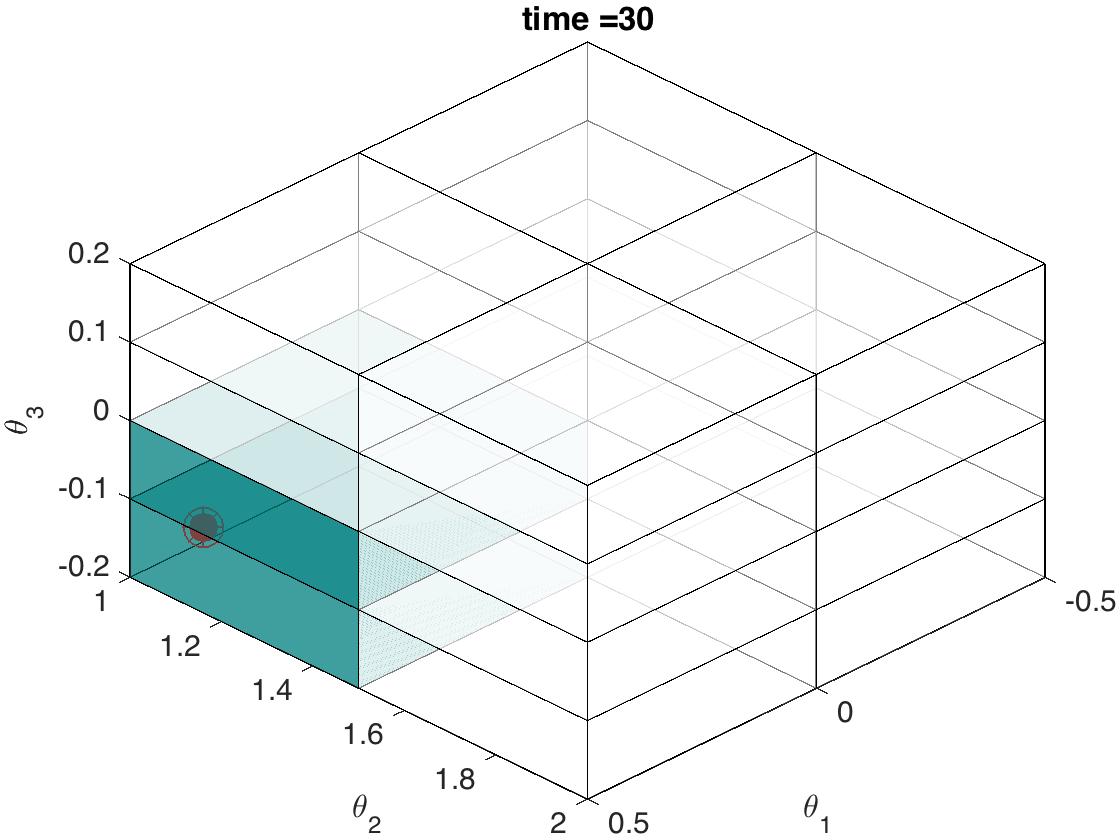}~
\caption{Case Study 2: Snapshots of  $\vartheta_k$ at various times, which are illustrated by the shaded regions. They always contain the ground truth parameter $\theta_1^*=0.45$, $\theta_2^*=1.11$, $\theta_3^*=-0.18$.}
\label{fig:safety}

\end{figure}

\section{Conclusion and Future Work}
We developed a framework to combine the recent advances in applications of formal methods in control theory with classical adaptive control. We used the concepts from transition systems, finite quotients, and product automata to introduce adaptive transition systems and correct-by-design adaptive control. Like most of other formal methods applications, our results suffer from high computational complexity. As discussed in the paper, the number of states in the ATS can be very large. Also, constructing finite quotients for infinite systems is computationally difficult.  
    
We believe that this paper opens up several research directions. Besides improving the ideas for the way we combine adaptive control and formal methods, we plan to develop efficient methods to construct finite adaptive transition systems for special classes of hybrid systems such as mixed-monotone systems and piecewise affine systems. We also plan to include optimal control.

\balance

\bibliographystyle{IEEEtran}
\bibliography{references2}  

\begin{thebibliography}{10}
\providecommand{\url}[1]{#1}
\csname url@rmstyle\endcsname
\providecommand{\newblock}{\relax}
\providecommand{\bibinfo}[2]{#2}
\providecommand\BIBentrySTDinterwordspacing{\spaceskip=0pt\relax}
\providecommand\BIBentryALTinterwordstretchfactor{4}
\providecommand\BIBentryALTinterwordspacing{\spaceskip=\fontdimen2\font plus
\BIBentryALTinterwordstretchfactor\fontdimen3\font minus
  \fontdimen4\font\relax}
\providecommand\BIBforeignlanguage[2]{{%
\expandafter\ifx\csname l@#1\endcsname\relax
\typeout{** WARNING: IEEEtran.bst: No hyphenation pattern has been}%
\typeout{** loaded for the language `#1'. Using the pattern for}%
\typeout{** the default language instead.}%
\else
\language=\csname l@#1\endcsname
\fi
#2}}

\bibitem{aastrom2013adaptive}
K.~J. {\AA}str{\"o}m and B.~Wittenmark, \emph{Adaptive control}.\hskip 1em plus
  0.5em minus 0.4em\relax Courier Corporation, 2013.

\bibitem{krsticnonlinear}
M.~Krstic, I.~Kanellakopoulos, and P.~Kokotovic, \emph{Nonlinear and adaptive
  control design}.\hskip 1em plus 0.5em minus 0.4em\relax John Willey, New
  York, 1995.

\bibitem{slotine1991applied}
J.-J.~E. Slotine, W.~Li, \emph{et~al.}, \emph{Applied nonlinear control}.\hskip
  1em plus 0.5em minus 0.4em\relax prentice-Hall Englewood Cliffs, NJ, 1991,
  vol. 199, no.~1.

\bibitem{ioannou2012robust}
P.~A. Ioannou and J.~Sun, \emph{Robust adaptive control}.\hskip 1em plus 0.5em
  minus 0.4em\relax Courier Corporation, 2012.

\bibitem{morse1996supervisory}
A.~S. Morse, ``Supervisory control of families of linear set-point
  controllers-part i. exact matching,'' \emph{IEEE Transactions on Automatic
  Control}, vol.~41, no.~10, pp. 1413--1431, 1996.

\bibitem{narendra2000adaptive}
K.~S. Narendra and C.~Xiang, ``Adaptive control of discrete-time systems using
  multiple models,'' \emph{IEEE Transactions on Automatic Control}, vol.~45,
  no.~9, pp. 1669--1686, 2000.

\bibitem{anderson2001multiple}
B.~Anderson, T.~Brinsmead, D.~Liberzon, and A.~Stephen~Morse, ``Multiple model
  adaptive control with safe switching,'' \emph{International journal of
  adaptive control and signal processing}, vol.~15, no.~5, pp. 445--470, 2001.

\bibitem{hespanha2003overcoming}
J.~P. Hespanha, D.~Liberzon, and A.~S. Morse, ``Overcoming the limitations of
  adaptive control by means of logic-based switching,'' \emph{Systems \&
  control letters}, vol.~49, no.~1, pp. 49--65, 2003.

\bibitem{di2013hybrid}
M.~di~Bernardo, U.~Montanaro, and S.~Santini, ``Hybrid model reference adaptive
  control of piecewise affine systems,'' \emph{IEEE Transactions on Automatic
  Control}, vol.~58, no.~2, pp. 304--316, 2013.

\bibitem{di2016extended}
M.~di~Bernardo, U.~Montanaro, R.~Ortega, and S.~Santini, ``Extended hybrid
  model reference adaptive control of piecewise affine systems,''
  \emph{Nonlinear Analysis: Hybrid Systems}, vol.~21, pp. 11--21, 2016.

\bibitem{baier2008principles}
C.~Baier, J.-P. Katoen, and K.~G. Larsen, \emph{Principles of model
  checking}.\hskip 1em plus 0.5em minus 0.4em\relax MIT press, 2008.

\bibitem{guay2012adaptive}
M.~Guay and M.~B{\"u}rger, ``Adaptive control of state constrained nonlinear
  systems in strict feedback form,'' in \emph{American Control Conference
  (ACC), 2012}.\hskip 1em plus 0.5em minus 0.4em\relax IEEE, 2012, pp.
  1143--1148.

\bibitem{aswani2013provably}
A.~Aswani, H.~Gonzalez, S.~S. Sastry, and C.~Tomlin, ``Provably safe and robust
  learning-based model predictive control,'' \emph{Automatica}, vol.~49, no.~5,
  pp. 1216--1226, 2013.

\bibitem{tanaskovic2014adaptive}
M.~Tanaskovic, L.~Fagiano, R.~Smith, and M.~Morari, ``Adaptive receding horizon
  control for constrained mimo systems,'' \emph{Automatica}, vol.~50, no.~12,
  pp. 3019--3029, 2014.

\bibitem{di2016indirect}
S.~Di~Cairano, ``Indirect adaptive model predictive control for linear systems
  with polytopic uncertainty,'' in \emph{American Control Conference (ACC),
  2016}.\hskip 1em plus 0.5em minus 0.4em\relax IEEE, 2016, pp. 3570--3575.

\bibitem{he2016adaptive}
W.~He, Y.~Chen, and Z.~Yin, ``Adaptive neural network control of an uncertain
  robot with full-state constraints,'' \emph{IEEE Transactions on Cybernetics},
  vol.~46, no.~3, pp. 620--629, 2016.

\bibitem{di2015indirect}
S.~Di~Cairano, ``Indirect-adaptive model predictive control for linear systems
  with polytopic uncertainty,'' \emph{arXiv preprint arXiv:1509.07170}, 2015.

\bibitem{clarke1999model}
E.~M. Clarke, O.~Grumberg, and D.~Peled, \emph{Model checking}.\hskip 1em plus
  0.5em minus 0.4em\relax MIT press, 1999.

\bibitem{tabuada2009verification}
P.~Tabuada, \emph{{Verification and Control of Hybrid Systems }}.\hskip 1em
  plus 0.5em minus 0.4em\relax Springer Science {\&} Business Media, 2008.

\bibitem{belta2017book}
C.~Belta, B.~Yordanov, and E.~{Aydin Gol}, \emph{Formal Methods for
  Discrete-Time Dynamical Systems}.\hskip 1em plus 0.5em minus 0.4em\relax
  Springer, 2017.

\bibitem{thomas2002automata}
W.~Thomas, T.~Wilke, \emph{et~al.}, \emph{Automata, logics, and infinite games:
  a guide to current research}.\hskip 1em plus 0.5em minus 0.4em\relax Springer
  Science \& Business Media, 2002, vol. 2500.

\bibitem{quindlen2016region}
J.~F. Quindlen, U.~Topcu, G.~Chowdhary, and J.~P. How, ``Region-of-convergence
  estimation for learning-based adaptive controllers,'' in \emph{American
  Control Conference (ACC), 2016}.\hskip 1em plus 0.5em minus 0.4em\relax IEEE,
  2016, pp. 2500--2505.

\bibitem{kozarev2016case}
A.~Kozarev, J.~Quindlen, J.~How, and U.~Topcu, ``Case studies in data-driven
  verification of dynamical systems,'' in \emph{Proceedings of the 19th
  International Conference on Hybrid Systems: Computation and Control}.\hskip
  1em plus 0.5em minus 0.4em\relax ACM, 2016, pp. 81--86.

\bibitem{sadigh2014learning}
D.~Sadigh, E.~S. Kim, S.~Coogan, S.~S. Sastry, and S.~A. Seshia, ``A learning
  based approach to control synthesis of markov decision processes for linear
  temporal logic specifications,'' in \emph{53rd IEEE Conference on Decision
  and Control}.\hskip 1em plus 0.5em minus 0.4em\relax IEEE, 2014, pp.
  1091--1096.

\bibitem{aksaray2016q}
D.~Aksaray, A.~Jones, Z.~Kong, M.~Schwager, and C.~Belta, ``Q-learning for
  robust satisfaction of signal temporal logic specifications,'' in
  \emph{Decision and Control (CDC), 2016 IEEE 55th Conference on}.\hskip 1em
  plus 0.5em minus 0.4em\relax IEEE, 2016, pp. 6565--6570.

\bibitem{fu2014adaptive}
J.~Fu, H.~G. Tanner, J.~Heinz, and J.~Chandlee, ``Adaptive symbolic control for
  finite-state transition systems with grammatical inference,'' \emph{IEEE
  Transactions on Automatic Control}, vol.~59, no.~2, pp. 505--511, 2014.

\bibitem{leahy2016integrate}
K.~J. Leahy, P.~Kannappan, A.~Jardine, H.~Tanner, J.~Heinz, and C.~Belta,
  ``Integration of deterministic inference with formal synthesis for control
  under uncertainty,'' in \emph{2016 American Control Conference (ACC)}, July
  2016, pp. 4829--4834.

\bibitem{klein2006experiments}
J.~Klein and C.~Baier, ``Experiments with deterministic $\omega$-automata for
  formulas of linear temporal logic,'' \emph{Theoretical Computer Science},
  vol. 363, no.~2, pp. 182--195, 2006.

\bibitem{fernandez1991fly}
J.-C. Fernandez and L.~Mounier, ````on the fly'' verification of behavioural
  equivalences and preorders,'' in \emph{International Conference on Computer
  Aided Verification}.\hskip 1em plus 0.5em minus 0.4em\relax Springer, 1991,
  pp. 181--191.

\bibitem{chatterjee2012survey}
K.~Chatterjee and T.~A. Henzinger, ``A survey of stochastic $\omega$-regular
  games,'' \emph{Journal of Computer and System Sciences}, vol.~78, no.~2, pp.
  394--413, 2012.

\bibitem{Yordanov2012}
\BIBentryALTinterwordspacing
B.~Yordanov, J.~Tumova, I.~Cerna, J.~Barnat, and C.~Belta, ``{Temporal Logic
  Control of Discrete-Time Piecewise Affine Systems},'' \emph{IEEE Transactions
  on Automatic Control}, vol.~57, no.~6, pp. 1491--1504, 2012.


\bibitem{yordanov2013formal}
B.~Yordanov, J.~T{\"{u}}mov{\'{a}}, I.~{\v{C}}ern{\'a}, J.~Barnat, and
  C.~Belta, ``Formal analysis of piecewise affine systems through
  formula-guided refinement,'' \emph{Automatica}, vol.~49, no.~1, pp. 261--266,
  2013.

\bibitem{nilsson2014incremental}
P.~Nilsson and N.~Ozay, ``Incremental synthesis of switching protocols via
  abstraction refinement,'' in \emph{53rd IEEE Conference on Decision and
  Control}.\hskip 1em plus 0.5em minus 0.4em\relax IEEE, 2014, pp. 6246--6253.

\bibitem{yordanov2008formal}
B.~Yordanov and C.~Belta, ``Formal analysis of piecewise affine systems under
  parameter uncertainty with application to gene networks,'' in \emph{2008
  American Control Conference}.\hskip 1em plus 0.5em minus 0.4em\relax IEEE,
  2008, pp. 2767--2772.

\end{thebibliography}

\end{document}